\newtheorem{theorem}{Theorem}[section]
\newtheorem{proposition}[theorem]{Proposition}
\newtheorem{definition}[theorem]{Definition}
\newcommand{\beq}{\begin{equation}}
\newcommand{\feq}[1]{\label{#1} \end{equation}}
\newcommand{\beqr}{\begin{eqnarray}}
\newcommand{\feqr}{\end{eqnarray}}
\def\non{\nonumber}
\newcommand{\rf}[1]{(\ref{#1})}
\def\pre#1#2#3{Phys. Rev. {\bf{E#1}} (#2) #3}
\def\rmp#1#2#3{Rev. Mod. Phys. {\bf{#1}} (#2) #3}
\def\ap#1#2#3{Ann. of Phys. {\bf{#1}} (#2) #3}
\def\apk#1#2#3{Ann. Physik {\bf{#1}} (#2) #3}
\def\chaos#1#2#3{Chaos {\bf{#1}} (#2) #3}
\def\jmp#1#2#3{J. Math. Phys. {\bf{#1}} (#2) #3}
\def\rmp#1#2#3{Rev. Mod. Phys. {\bf{A#1}} (#2) #3}
\def\laa#1#2#3{Linear Algebra Appl. {\bf{#1}} (#2) #3}
\begin{document}

\begin{center}


{\Large \bf The fractional Schr\"{o}dinger operator and Toeplitz matrices.}\\
[4mm]

\large{Agapitos Hatzinikitas} \\ [5mm]

{\small Department of Mathematics, \\
School of Sciences, \\
University of Aegean, \\
Karlovasi, 83200\\
Samos, Greece \\
Email: ahatz@aegean.gr}\\ [5mm]

\end{center}
\begin{abstract}
Confining a quantum particle in a compact subinterval of the real line with Dirichlet boundary conditions, we identify the connection of the one-dimensional fractional Schr\"{o}dinger operator with the  truncated Toeplitz matrices. We determine the asymptotic behaviour of the product of eigenvalues for the $\alpha$-stable symmetric laws by employing the Szeg\"{o}'s strong limit theorem. The results of the present work can be applied to a recently proposed model for a particle hopping on a bounded interval in one dimension whose hopping probability is given a discrete representation of the fractional Laplacian. 
\end{abstract}

\noindent\textit{Key words:} Fractional Schr\"{o}dinger operator; Toeplitz matrices; Asymptotic behaviour of eigenvalues.\\
\textit{PACS:} 03.65.-w, 02.30.Jr, 02.10.Yn\\
\section{Introduction}
\label{sec0}
\par It is well known that a stable law \cite{Ref1,Ref2,Ref3,Ref3a,Ref3b}, which is a direct generalization of the Gaussian distribution, is generated by the parameters $(\alpha, \beta, c, \tau)$ where: 
\begin{description}
\item[$-$] $\alpha$ is the characteristic exponent that determines the degree of leptokurtosis
and the fatness of the tails. In the present work we consider $\alpha \in (1,2)$.
\item[$-$] $\beta$ is the skewness parameter which characterizes the degree of asymmetry
of the L$\acute{e}$vy measure and takes values in the interval $[-1,1]$.  
\item[$-$] c is the scale parameter with range $(0,\infty)$ and measures
scale in place of standard deviation.
\item[$-$] $\tau$ is the location parameter which saturates the set of real numbers
and shifts the distribution to the left or right.
\end{description}
It is often denoted by $S_{\alpha}(\beta,c,\tau)$. We adopt this notation to declare different parametric families of operators one might consider at quantum level. Our study will be focused on $S_{\alpha}(c)$ ($\alpha$-stable symmetric) operators.
\par In the literature the papers \cite{Ref4,Ref5} claim that the time-independent fractional Schr\"{o}dinger equation with infinite one-dimensional square well potential has the same eigenfunctions as for the standard non-fractional case, only with modified energies. Recently in \cite{Ref6}, it was argued via a proof by contradiction, that the ground state cannot be a solution either in the interval $\alpha \in (-1,1)$ (unless $\alpha=0$) or in $(1,2)$. Nevertheless, by using the Gr\"{u}nwald-Letnikov definition for the fractional derivative, one can numerically evaluate the eigenvalues and extract information about the asymptotic behaviour of the product of eigenvalues without knowing their explicit form. To prove this statement we organize our paper as follows.  
\par In Section II the square well potential with perfectly rigid walls serves as a simpified model from which one can derive and compare quantities (such as the determinant and trace) stemming from analytical and discretization methods. At discretized level one encounters the real symmetric Toeplitz matrix \rf{sec01 : eq7} which forms a subset of the class of symmetric centrosymmetric matrices \cite{Ref11}. The eigenvalue problem can be solved exactly \rf{sec01 : eq71a} and the asymptotic behaviour of the product of eigenvalues is an easily accessible task since a closed expression for the determinant of this matrix can be determined. 
\par In Section III the situation for time-independent fractional operators in the class $S_{\alpha}(\beta,c)$, is more involved. Adopting the Gr\"{u}nwald-Letnikov definition for fractional derivatives with anisotropic coefficients, retaining Dirichlet boundary conditions and using the discretization method, we end up with the finite dimensional matrix \rf{sec1 : eq11}. This matrix is recognized to be a Toeplitz matrix with well known properties. Restricting to the subclass of $S_{\alpha}(c)$ laws one can prove that its eigenvalues are simple and non-negative thus positive definite. 
\par In Section IV the symbol (or generating function) $f$ of the Toeplitz matrix \rf{sec1 : eq11} is shown to belong to the intersection of the Wiener and Besov spaces, and all requirements of Sz\"{e}go's strong limit theorem are satisfied except that the zeros of $f$ are located at even integer multiples of $\pi$. Excluding these points from the complex unit circle the logarithm of the symbol can be expanded in Fourier modes while the determinant of \rf{sec1 : eq11} diverges as one might expect since the point spectrum of the associated operator is unbounded from above. Our final result for the asymptotic behaviour of the product of eigenvalues is captured by \rf{sec2 : eq6} and \rf{sec3 : eq16} which hold for every rational $\alpha$ in the superdiffusion region $(1,2)$.  
\section{The square well potential as a toy model}
\label{sec01}

We study the asymptotic behaviour of the spectrum for a particle confined in an one-dimensional square well potential of length $L$ with perfectly rigid walls. The potential is defined as 
\begin{eqnarray}
V(x)=\left\{ \begin{array}{ll} 0, & 0<x<L \\ \infty, & x\leq 0, x\geq L. \end{array} \right. 
\label{sec01 : eq1}
\end{eqnarray}
The usual time-idependent Schr\"{o}dinger equation with Dirichlet boundary conditions is 
\begin{eqnarray}
-\frac{\hbar^2}{2m}\frac{d^2}{dx^2}\psi(x)&=&E\psi(x), \quad 0<x<L \non \\
\psi(x)&=& 0, \quad \forall x\notin(0,L) 
\label{sec01 : eq2}
\end{eqnarray}
and has solutions the normalized eigenfunctions, 
\begin{eqnarray}
\psi_n(x)=\left\{\begin{array}{ll} \sqrt{\frac{2}{L}}\sin\left(\frac{n\pi}{L}x\right), & n\in {\mathbb N}, \,\, 0<x<L  \\ 0, & \quad \forall x\notin(0,L). \end{array}\right.
\label{sec01 : eq3}
\end{eqnarray}
The operator $-\frac{d^2}{dx^2}$ has a discrete spectrum unbounded from above with points 
\begin{eqnarray}
E_n=\frac{\hbar^2}{2m}\left(\frac{n\pi}{L}\right)^{2}, \,\, n\in {\mathbb N}.
\label{sec01 : eq4}
\end{eqnarray}
The product of $N$ eigenvalues is
\begin{eqnarray}
\prod_{n=1}^N E_n=\left(\frac{\hbar^2 \pi^2}{2mL^2} \right)^N (N!)^2 
\approx \frac{2}{e}\left(\frac{\hbar^2}{2mL^2} \right)^N \left(\frac{\pi}{e}\right)^{2N+1}(N+1)^{2N+1}
\label{sec01 : eq5} 
\end{eqnarray}
where the Stirling's formula $N!\approx \sqrt{2\pi} (N+1)^{N+\frac{1}{2}}e^{-(N+1)}$ for large $N$ has been applied. The sum of the $N$ eigenvalues is
\begin{eqnarray}
\sum_{n=1}^N E_n=\left(\frac{\hbar^2 \pi^2}{2mL^2} \right)\sum_{n=1}^N n^2=\left(\frac{\hbar^2 \pi^2}{2mL^2} \right)\left(\frac{1}{3}N^2+\frac{1}{2}N+\frac{1}{6}\right)N.
\label{sec01 : eq51} 
\end{eqnarray}
\par In the discretized method we consider a grid of $N'-$ordered points of the interval $[0,L]$
\begin{eqnarray}
x_0=0<x_1<\cdots<x_{N'}<x_{N'+1}=L
\label{sec01 : eq52} 
\end{eqnarray}
with equal spacing $x_{i+1}-x_i=\epsilon>0, \,\, \forall i=0,\cdots,N'$. Then by employing the centered second difference estimator of the second ordered derivative, namely
\begin{eqnarray}
\frac{d^2}{dx^2}\psi(x)=\lim_{\epsilon\rightarrow 0^+}\frac{\psi(x+\epsilon)-2\psi(x)+\psi(x-\epsilon)}{\epsilon^2}
\label{sec01 : eq6}
\end{eqnarray}
and denoting the values of the function $\psi$ at the grid points by $\psi_i=\psi(x_i)$, 
the problem \rf{sec01 : eq2} is equivalent to 
\begin{eqnarray}
&& \left(\mathcal{A}_{N'}\right)_{ij} \psi_j=E_i \psi_i, j=1,\cdots,N' \non \\
&\rm{with}& \psi_{-k}=\psi_0=0=\psi_{N'+k}, \quad \forall k\in {\mathbb N}
\label{sec01 : eq53} 
\end{eqnarray}
where
\begin{eqnarray}
\mathcal{A}_{N'}=\frac{\hbar^2}{2m\epsilon^2}\mathcal{B}_{N'}=\frac{\hbar^2}{2m\epsilon^2}\left(\begin{array}{rrrrrr} 2 & -1 & 0 & \cdots & 0 & 0 \\
-1 & 2 & -1 & \cdots & 0 & 0 \\
\vdots & \vdots & \vdots & \cdots & \vdots & \vdots \\
0 & 0 & 0 & \cdots & -1 & 2 \\  
\end{array}\right)_{N'\times N'}.
\label{sec01 : eq7} 
\end{eqnarray}
The eigenvalues and eigenvectors of $\mathcal{B}_{N'}$ are \cite{Ref7}
\begin{eqnarray}
\lambda_k &=& 2\left(1+\cos\left(\frac{k\pi}{N'+1}\right)\right), \, 1\leq k \leq N' \non \\
e_{mk} &=& \sin\left(\frac{k(m+1)\pi}{N'+1}\right), \, 0\leq m \leq N'-1.
\label{sec01 : eq71a} 
\end{eqnarray}
The trace of $\mathcal{A}_{N'}$ is
\begin{eqnarray}
\rm{Tr}(\mathcal{A}_{N'})&=&\frac{\hbar^2}{2m\epsilon^2}\rm{Tr}(\mathcal{B}_{N'})=\frac{\hbar^2}{2mL^2}(N'+1)^2\left(2N'+\sum_{k=1}^{N'}\cos\left(\frac{k\pi}{N'+1}\right)\right) \non \\
&=&\frac{\hbar^2}{2mL^2}(N'+1)^2 \,2N'
\label{sec01 : eq71}
\end{eqnarray}
while its determinant is given recursively by
\begin{eqnarray}
\det \mathcal{A}_{N'}=2 \left(\frac{\hbar^2}{2m \epsilon^2}\right) \det \mathcal{A}_{N'-1}-\left(\frac{\hbar^2}{2m \epsilon^2}\right)^2 \det \mathcal{A}_{N'-2}
\label{sec01 : eq8} 
\end{eqnarray}
with solution
\begin{eqnarray}
\det \mathcal{A}_{N'}=\left(\frac{\hbar^2}{2m L^2}\right)^{N'} (N'+1)^{2N'+1}.
\label{sec01 : eq 9}
\end{eqnarray}
The determinant could had equally been evaluated using the eigenvalues \rf{sec01 : eq71a} 
\begin{eqnarray}
\det \mathcal{B}_{N'}=\prod_{k=1}^{N'}\lambda_k=\prod_{k=1}^{N'}(2+2\cos\left(\frac{k\pi}{N'+1}\right))=N'+1.
\label{sec01 : eq91} 
\end{eqnarray}
Increasing the number $N'$ of intermediate points we approach the exact eigenvalues of problem \rf{sec01 : eq2} with higher accuracy. 
\section{The discretized fractional Schr\"{o}dinger operator versus Toeplitz matrices}
\label{sec1}

We consider the fractional operator \cite{Ref8}
\begin{eqnarray}
\mathcal{\hat{A}}(\alpha,\beta)=\frac{1}{\cos \left(\frac{\pi\alpha}{2}\right)}\left(K^+\frac{d^{\alpha}}{dx^{\alpha}}+K^-\frac{d^{\alpha}}{d(-x)^{\alpha}}\right) 
\label{sec1 : eq1}
\end{eqnarray}
where $K^{\pm}$ are anisoptropic constants satisfying $K^{\pm}\geq 0$ and $K^++K^->0$. Operator \rf{sec1 : eq1} can be casted into the equivalent form 
\begin{eqnarray}
\mathcal{\hat{A}}(\alpha,\beta)=\frac{K_{\alpha}}{2\cos \left(\frac{\pi\alpha}{2}\right)}\left((1+\beta)\frac{d^{\alpha}}{dx^{\alpha}}+(1-\beta)\frac{d^{\alpha}}{d(-x)^{\alpha}}\right) 
\label{sec1 : eq2}
\end{eqnarray}
where $[K_{\alpha}]=\frac{[M][L]^{\alpha+2}}{[T]^2}$ and 
\begin{eqnarray}
K_{\alpha}=K^++K^-, \quad \alpha\in (0,2)/\{1\}, \quad \beta=\frac{K^+-K^-}{K^++K^-}, \quad \beta\in [-1,1].
\label{sec1 : eq3} 
\end{eqnarray}
For the case $\alpha=1$ one should consider an operator as the one proposed in \cite{Ref8}. We study only the superdiffusive region $1< \alpha <2$ since the subdiffusive can be treated on equal footing with minor modifications. 
\begin{definition}
The derivatives in \rf{sec1 : eq2} for  $L^1_{(0,L)}$ functions are defined by 
\begin{eqnarray}
\left(\frac{d^{\alpha}}{dx^{\alpha}}\psi\right)(x)&=&\lim_{\epsilon\rightarrow 0^+}\frac{\left(\Delta^{\alpha}_{\epsilon}\psi\right)(x)}{\epsilon^{\alpha}}=\lim_{\epsilon\rightarrow 0^+}\frac{1}{\epsilon^{\alpha}}\sum_{n=0}^{\left[\frac{x}{\epsilon}\right]}w_n(\alpha) \psi(x-(n-1)\epsilon), \,\, \rm{for} \,\, x>0 \non \\
\left(\frac{d^{\alpha}}{d(-x)^{\alpha}}\psi\right)(x)&\!\!=&\!\! \lim_{\epsilon\rightarrow 0^+}\frac{\left(\Delta^{\alpha}_{-\epsilon}\psi\right)(x)}{\epsilon^{\alpha}}=\lim_{\epsilon\rightarrow 0^+}\frac{1}{\epsilon^{\alpha}}\sum_{n=0}^{\left[\frac{L-x}{\epsilon}\right]}w_n(\alpha) \psi(x+(n-1)\epsilon), \,\, \rm{for} \,\, x<L 
\label{sec1 : eq4}
\end{eqnarray}
and are called the right-handed (left-handed) Gr\"{u}nwald-Letnikov fractional derivatives \cite{Ref9} \footnote{Actually this is a variant of the Gr\"{u}nwald-Letnikov fractional derivative, in which the function evaluations are shifted to the right.}. The weights $w_n(\alpha)$ are given recursively by \footnote{Another way to compute the coefficients $w_n(\alpha)$ is by their generating function $w(z)=(1-z)^{\alpha}$ which has the Taylor expansion $w(z)=\sum_{k=0}^{\infty}(-1)^k \left(\begin{array}{l} \alpha \\ k \end{array} \right) z^k$ for $|z|\leq 1$ and $\alpha\in [1,2]$. }
\begin{eqnarray}
w_0(\alpha)&=&1, \non \\
\quad w_n(\alpha)&=& \left(1-\frac{\alpha+1}{n}\right)w_{n-1}(\alpha)=(-1)^n \left(\begin{array}{l} \alpha \\ n \end{array} \right)=\frac{\alpha(\alpha-1)\Gamma(n-\alpha)}{\Gamma(2-\alpha)\Gamma(n+1)}, \non \\
&& \forall n\in {\mathbb N}.
\label{sec1 : eq5}
\end{eqnarray}
\end{definition}
When $\alpha=[\alpha]$, a positive integer, then $w_n([\alpha])=0, \,\, \forall n\geq [\alpha]+1$. The only negative weight in the superdiffusive region is $w_1(\alpha)$ \footnote{In the subdiffusive region all weights are negative except $w_0$.} and the infinite series of weights converges to zero (see \rf{apAb1} for the proof)
\begin{eqnarray}
\sum_{n=0}^{\infty}w_n(\alpha)=0.
\label{sec1 : eq6}
\end{eqnarray}
Substituting $\alpha=2$ and  $\beta=0$ into \rf{sec1 : eq4} we recover the centered second difference estimator \rf{sec01 : eq6} since $w_n(2)=0, \,\, \forall n\geq 3$.
\par The eigenvalue problem 
\begin{eqnarray}
&& \left(\mathcal{\hat{A}}(\alpha,\beta)\psi \right)(x)=\lambda \psi(x), \quad \psi \in L^2_{(0,L)} \non \\
&\rm{with}& \psi(x)=0, \quad \forall x\notin (0,L) 
\label{sec1 : eq8} 
\end{eqnarray}
using a grid of n-ordered points, as in Section II, gives the discretized problem
\begin{eqnarray}
&& \left(\mathcal{A}_{n}(\alpha,\beta)\right)_{ij} \psi_j=\lambda \psi_i, j=1,\cdots,n \non \\
&\rm{with}& \psi_{-k}=\psi_0=0=\psi_{n+k}, \quad \forall k\in {\mathbb N}.
\label{sec1 : eq10} 
\end{eqnarray}
The approximation of the operator $\hat{\mathcal{A}}(\alpha,\beta)$ by the sum of two matrices corresponding to the left- and right-fractional derivatives, results in a ${\mathbb R}^{n\times n}$ matrix with elements 
\begin{eqnarray}
\left(\mathcal{A}_{n}(\alpha,\beta)\right)_{ij}=\frac{K_{\alpha}}{2\cos \left(\frac{\pi \alpha}{2}\right)\epsilon^{\alpha}}\times \left\{\begin{array}{llll} 2w_1 & \rm{for} & j=i\,;\, & i=1,\cdots , n \\ 
(1+\beta)w_0+(1-\beta) w_2 & \rm{for} & j=i+1 \, ; \, & i=1,\cdots , n-1 \\ 
(1-\beta) w_{3} & \rm{for} & j=i+2 \, ; \, & i=1,\cdots, n-2 \\ 
(1-\beta) w_{4} & \rm{for} & j=i+3 \, ; \, & i=1,\cdots, n-3 \\ 
\vdots & \vdots & \vdots & \vdots \\ 
(1-\beta) w_{n-1} & \rm{for} & j=i+n-2 \, ; \, & i=1,2 \\ 
(1-\beta) w_{n} & \rm{for} & j=i+n-1 \, ; \,& i=1. \end{array}\right. \non \\
\label{sec1 : eq11}
\end{eqnarray}
Note that we do not allow the end points $x_0=0$ and $x_{n+1}=L$ to be present in the calculation. The elements of the matrix parallel to the main diagonal are equal and \rf{sec1 : eq11} is recognized to be a Toeplitz matrix. Renaming the entries of the matrix according to
\begin{eqnarray}
\left(\mathcal{A}_{n}(\alpha,\beta)\right)_{ij}=\frac{K_{\alpha}}{2\cos \left(\frac{\pi \alpha}{2}\right)\epsilon^{\alpha}}(\tilde{a}_{i-j}(\alpha,\beta)), \quad i,j=1,\cdots, n
\label{sec1 : eq15}
\end{eqnarray}  
its symbol in general is given by 
\begin{eqnarray}
f(\alpha,\beta;z;n)=\sum_{k=-n}^{n}f_k(\alpha,\beta) z^k, \quad z\in{\mathbb C}.
\label{sec1 : eq16}
\end{eqnarray}
This is a rational function of z which is analytic everywhere on the complex plane except at the $n$-poles. On the counterclockwise complex unit circle ${\mathbb T}=\{z\in {\mathbb C}: |z|=1\}$ the generating function \rf{sec1 : eq16}, for a large ordered sequence $\{x_n\}_{n\in {\mathbb N}}$ of intermediate points of the interval $(0,L)$, is given by the expression
\begin{eqnarray}
\lim_{n\rightarrow \infty}f(\alpha,\beta;{\mathbb T};n)&=&f(\alpha,\beta;e^{i\theta};\infty)=u(\alpha;\theta;\infty)+iv(\alpha,\beta;\theta;\infty), \quad \textrm{where} \label{sec1 : eq17} \\
u(\alpha;\theta;\infty)&=& \frac{K_{\alpha}}{\left|\cos \left(\frac{\pi \alpha}{2}\right)\right|\epsilon^{\alpha}}\left[\alpha-\cos \theta -\frac{\alpha(\alpha-1)}{\Gamma(2-\alpha)}\sum_{k=1}^{\infty}\cos(k\theta)\frac{\Gamma(k+1-\alpha)}{\Gamma(k+2)}\right] \non \\
&=&\frac{K_{\alpha}}{\left|\cos \left(\frac{\pi \alpha}{2}\right)\right|\epsilon^{\alpha}} \left|2\sin \left(\frac{\theta}{2}\right) \right|^{\alpha}\cos\left[(\theta-\pi) \left(1-\frac{\alpha}{2}\right)\right] \label{sec1 : eq17a1} \\ 
v(\alpha,\beta;\theta;\infty)&=& \frac{K_{\alpha}\beta}{\left|\cos \left(\frac{\pi \alpha}{2}\right)\right|\epsilon^{\alpha}}\left[\sin \theta-\frac{\alpha(\alpha-1)}{\Gamma(2-\alpha)}\sum_{k=1}^{\infty}\sin(k\theta)\frac{\Gamma(k+1-\alpha)}{\Gamma(k+2)}\right] \non\\
&=&-\frac{K_{\alpha}\beta}{\left|\cos \left(\frac{\pi \alpha}{2}\right)\right|\epsilon^{\alpha}} \left|2\sin \left(\frac{\theta}{2}\right) \right|^{\alpha}\sin\left[(\theta-\pi) \left(1-\frac{\alpha}{2}\right)\right].
\label{sec1 : eq17b1}
\end{eqnarray}
The infinite cosine series is non-negative definite for $\theta\neq 2m\pi$, $2\pi$-periodic and conveges to \rf{sec1 : eq17a1} (see \rf{apA1} for the proof) for $mT<\theta<2\pi+mT$, where $T$ is its period. Function \rf{sec1 : eq17a1} is periodic only for rational values of $\alpha$ since the sine and cosine functions have periods $T_1=2\pi n_1$, $T_2=\frac{4\pi n_2}{2-\alpha}$ and therefore the product function is periodic only for $n_2=n_1\left(1-\frac{\alpha}{2}\right)$. For example the value $\alpha=\frac{3}{2}$ gives $(n_1,n_2)=(4,1)$ and $T=8\pi$ as it is also justified in the following plot.  
\begin{figure}[hp]
\centering
\includegraphics[scale=0.7]{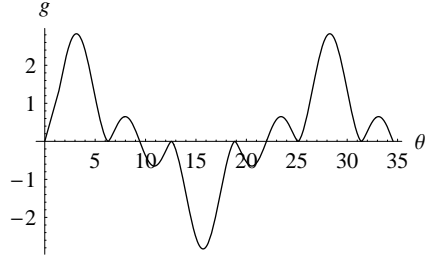}
\caption{The plot of the function $g(\theta)=\left|2\sin \left(\frac{\theta}{2}\right) \right|^{\alpha}\cos\left[(\theta-\pi) \left(1-\frac{\alpha}{2}\right)\right]$ for $\theta\in[0,11\pi]$.}
\end{figure}

We can shift the origin by substituting $\theta\rightarrow \theta-\pi$ in \rf{sec1 : eq17a1} and \rf{sec1 : eq17b1} thus obtaining 
\begin{eqnarray}
u(\alpha;\theta;\infty)&=&\frac{K_{\alpha}}{\left|\cos \left(\frac{\pi \alpha}{2}\right)\right|\epsilon^{\alpha}} \left|2\cos \left(\frac{\theta}{2}\right) \right|^{\alpha}\cos\left[\theta \left(1-\frac{\alpha}{2}\right)\right] \label{sec1 : eq17c1} \\
v(\alpha,\beta;\theta;\infty)&=&-\frac{K_{\alpha}\beta}{\left|\cos \left(\frac{\pi \alpha}{2}\right)\right|\epsilon^{\alpha}} \left|2\cos\left(\frac{\theta}{2}\right) \right|^{\alpha}\sin\left[\theta \left(1-\frac{\alpha}{2}\right)\right]
\label{sec1 : eq17d1}
\end{eqnarray}
where $-\pi+mT<\theta<\pi+mT$. Note that for $\alpha=2$ and $\beta=0$ the symbol becomes
\begin{eqnarray}
f(2,\beta;{\mathbb T};n)=u(2,\beta;{\mathbb T};n)= 2K_{\alpha}(1-\cos \theta)=K_{\alpha}\left(2\sin \frac{\theta}{2}\right)^2 
\label{sec1 : eq18}
\end{eqnarray}
with range $[0,4K_{\alpha}]$.  
\par In the complex plane the symbol is graphically represented by a simple closed contour $\mathcal{C}$.
\begin{figure}[hp]
\centering
\includegraphics[scale=0.7]{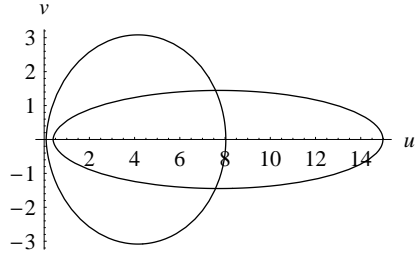}
\caption{Symbol curves $f(\alpha,\beta;{\mathbb T};n)$ in the complex plane for the generating function \rf{sec1 : eq17} with parameters $\alpha=1.5$, $\beta=0.8$ and $\alpha=1.2$, $\beta=0.2$ (the far right contour). The number of intermediate points is $n=200$.}
\end{figure}

For a given value of $\theta \in (0,2\pi)$ the point $(u,v)\in \mathcal{C}$ satisfies the equation 
\begin{eqnarray}
\left(\frac{\cos \left(\frac{\pi \alpha}{2}\right)}{K_{\alpha}}u \right)^2 +\left(\frac{\cos \left(\frac{\pi \alpha}{2}\right)}{\beta K_{\alpha}}v\right)^2 
&=& \left(2\sin \frac{\theta}{2} \right)^{2\alpha}.
\label{sec1 : eq19}
\end{eqnarray}
The matrix \rf{sec1 : eq11} contains the following subclasses of matrices:
\begin{enumerate}
\item[($i$)] The matrix $\mathcal{A}_{n}(\alpha,\beta=0)$ is real symmetric and thus having real eigenvalues. Moreover its eigenvalues are  simple in accordance with the following theorem (see Appendix B for the terminology): 
\begin{theorem}[Trench 1993]
Suppose that $f$ is nonincreasing and 
\begin{eqnarray}
f(0+)=M>m=f(\pi-). 
\end{eqnarray}
Then for every $n$ the matrix $\mathcal{A}_n$ has $n$ distinct eigenvalues in $(m,M)$, its even and odd spectra are interlaced, and its largest eigenvalue is even. 
\end{theorem}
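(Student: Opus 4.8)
The plan is to reduce $\mathcal{A}_n$ by its centrosymmetry, identify each of the two blocks with a compression of multiplication by the generating function $f$ to a space of polynomials, read the localization of the spectrum in $(m,M)$ off the resulting quadratic form, deduce simplicity of each block from a sign trick that uses monotonicity of $f$, and finally extract the interlacement from a secular equation. I expect the last step to be the real work.

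\emph{Centrosymmetric reduction and localization.} A real symmetric Toeplitz matrix satisfies $J\mathcal{A}_nJ=\mathcal{A}_n$ with $J$ the flip matrix, so $\mathcal{A}_n$ preserves the $(\pm1)$-eigenspaces of $J$; the orthogonal change of basis adapted to those eigenspaces block-diagonalizes $\mathcal{A}_n$ into an \emph{even} block $\mathcal{A}_n^{+}$ of size $\lceil n/2\rceil$ (on $J$-symmetric vectors) and an \emph{odd} block $\mathcal{A}_n^{-}$ of size $\lfloor n/2\rfloor$ (on $J$-antisymmetric vectors), and $\mathrm{spec}(\mathcal{A}_n)=\mathrm{spec}(\mathcal{A}_n^{+})\sqcup\mathrm{spec}(\mathcal{A}_n^{-})$ is precisely the even/odd spectrum. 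For $x\in\mathbb{R}^n$ put $X(e^{i\theta})=\sum_j x_j e^{i(j-1)\theta}$; then $x^{T}\mathcal{A}_nx=\frac{1}{2\pi}\int_{-\pi}^{\pi}f(\theta)\,|X(e^{i\theta})|^2\,d\theta$, and when $x$ is $J$-symmetric (resp.\ $J$-antisymmetric) the phase in $X$ factors out, so $|X(e^{i\theta})|^2$ is the square of a cosine (resp.\ sine) polynomial; thus $\mathcal{A}_n^{+}$ (resp.\ $\mathcal{A}_n^{-}$) is, up to a fixed positive scalar, the compression of multiplication by $f$ to a finite-dimensional subspace $V_n^{+}$ (resp.\ $V_n^{-}$) of $L^2[0,\pi]$ of cosine- (resp.\ sine-)type polynomials. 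Since $f$ is nonincreasing and $M>m$, it is non-constant, so $\{f<M\}$ and $\{f>m\}$ both have positive measure; for any nonzero trigonometric polynomial $q$ this forces $m\int|q|^2<\int f|q|^2<M\int|q|^2$, whence every eigenvalue of each block — and therefore each of the $n$ eigenvalues of $\mathcal{A}_n$ — lies in the open interval $(m,M)$.

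\emph{Simplicity of each block, and ``largest is even''.} The substitution $t=\cos\theta$ carries $V_n^{\pm}$ to algebraic polynomials $\mathcal{P}_d$ (with $d+1$ the block size), the measure $d\theta$ to a Chebyshev-type weight $w(t)\,dt$ with $w>0$, and $f$ to a nondecreasing $\tilde f(t)=f(\arccos t)$, so each block becomes $P_dM_{\tilde f}P_d$ on $\mathcal{P}_d$, where $P_d$ is the projection orthogonal for $d\nu=w\,dt$. If a block had an eigenvalue $\lambda\in(m,M)$ of multiplicity at least two, pick $t_0$ in the zero set of $\tilde f-\lambda$ (an interval, since $\tilde f$ is monotone); then $g(t):=(\tilde f(t)-\lambda)(t-t_0)\ge 0$ with $\{g>0\}$ of positive measure. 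Two independent eigen-polynomials produce a nonzero $r\in\mathcal{P}_d$ with $r(t_0)=0$ and $(\tilde f-\lambda)r\perp_\nu\mathcal{P}_d$; writing $r=(t-t_0)r_1$ with $0\ne r_1\in\mathcal{P}_{d-1}$ yields $0=\langle(\tilde f-\lambda)r,\,r_1\rangle_\nu=\int g\,r_1^{2}\,w\,dt>0$, a contradiction. Hence each of $\mathcal{A}_n^{\pm}$ has simple spectrum. That the largest eigenvalue is even follows by comparing top Rayleigh quotients: the members of $V_n^{+}$ are exactly the ones not forced to vanish at $\theta=0$, where $f$ attains its supremum $M$, while those of $V_n^{-}$ do vanish there, so a $q\in V_n^{+}$ concentrated near $\theta=0$ gives $\max\mathrm{spec}(\mathcal{A}_n^{+})>\max\mathrm{spec}(\mathcal{A}_n^{-})$; dually, near $\theta=\pi$ (where $f=m$) it is $V_n^{-}$ that carries mass when $n$ is even and $V_n^{+}$ when $n$ is odd, explaining why the smallest eigenvalue is odd for even $n$ and even for odd $n$.

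\emph{Interlacement.} Encode the two spectra in the secular function $R(\lambda)=\det(\lambda I-\mathcal{A}_n^{-})/\det(\lambda I-\mathcal{A}_n^{+})$ (with the obvious adjustment of degrees when $n$ is odd); by the previous step its poles and zeros are, respectively, the simple even and the simple odd eigenvalues. Interlacement is then equivalent to $R$ being strictly monotone on each interval between consecutive poles, which I would obtain from a Christoffel--Darboux-type derivative identity for $R$ written through the common generating function of the two blocks, the resulting density having a fixed sign exactly because $\tilde f$ is monotone and non-constant, and the same input ruling out any coincidence between an even and an odd eigenvalue. \textbf{This is the main obstacle:} making that secular argument precise, keeping it uniform in the parity of $n$, and upgrading every inequality to a strict one — all of which genuinely use the hypothesis $M>m$, since for $M=m$ the matrix $\mathcal{A}_n$ is a multiple of the identity and the statement is false. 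Finally, to apply this to the matrix \rf{sec1 : eq11} with $\beta=0$ one checks that its generating function is nonincreasing on $[0,\pi]$; after the shift $\theta\to\theta-\pi$ this is precisely the content of \rf{sec1 : eq17c1}, since there $|2\cos(\theta/2)|^{\alpha}$ and $\cos[\theta(1-\alpha/2)]$ are both nonnegative and nonincreasing on $[0,\pi]$ for $\alpha\in(1,2)$.
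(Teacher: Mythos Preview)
The paper does not prove this theorem at all: it is quoted as a result of Trench (1993) and used as a black box, so there is no ``paper's own proof'' to compare against. What you have written is therefore an independent attempt at Trench's result.

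Your reduction via centrosymmetry and the localization of the spectrum in $(m,M)$ by the Rayleigh quotient are correct and standard. The simplicity argument for each block is essentially right, though your phrase ``pick $t_0$ in the zero set of $\tilde f-\lambda$'' needs a small repair: since $f$ is only assumed nonincreasing, $\tilde f$ may jump over $\lambda$ and the zero set can be empty. What you actually need (and what makes your sign trick work) is any $t_0$ with $(\tilde f(t)-\lambda)(t-t_0)\ge 0$ for all $t$, which exists by monotonicity regardless of whether $\lambda$ is attained.

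The genuine gap is exactly where you flag it. Your ``largest is even'' paragraph is a heuristic, not a proof: saying that elements of $V_n^{+}$ can carry mass near $\theta=0$ while those of $V_n^{-}$ vanish there does not by itself give the strict inequality $\max\mathrm{spec}(\mathcal{A}_n^{+})>\max\mathrm{spec}(\mathcal{A}_n^{-})$; you would need to exhibit, for the top odd eigenvector, a symmetric competitor with strictly larger Rayleigh quotient. And the interlacement step is only a plan: the claimed monotonicity of $R(\lambda)$ between poles via a ``Christoffel--Darboux-type derivative identity'' is asserted but not established, and this is precisely the heart of Trench's theorem. In Trench's original proof the interlacing and the parity of the extreme eigenvalue are obtained together from explicit recursions linking the even and odd characteristic polynomials (equivalently, from a rank-one relation between the two blocks), which yields the sign of $R'$ directly. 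Until you supply that identity, the proposal remains a correct outline with its central lemma missing.
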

In our case the symbol of the truncated Toeplitz matrix is a decreasing function of $\theta \in (-\pi,0)$ and $f(-\pi+)>f(0-)$ thus the theorem is applicable. 
\begin{proposition}
The matrix $\mathcal{A}_n$ is strictly positive definite. 
\end{proposition}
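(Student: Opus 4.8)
The plan is to prove that $\mathcal{A}_n:=\mathcal{A}_n(\alpha,0)$ is \emph{strictly diagonally dominant with positive main diagonal}; since it is real symmetric (case $(i)$ above), Gershgorin's theorem then confines its spectrum to a disc of positive radius about a positive point, forcing every eigenvalue to be positive. All the arithmetic needed is already recorded in Section~III: in the range $\alpha\in(1,2)$ the only negative weight is $w_1(\alpha)=-\alpha$, every $w_n(\alpha)$ with $n\ge 2$ is strictly positive, and $\sum_{n\ge 0}w_n(\alpha)=0$ by \rf{sec1 : eq6}.

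First I would fix the normalisation. For $\alpha\in(1,2)$ one has $\cos(\pi\alpha/2)<0$, so the scalar $\tfrac{K_\alpha}{2\cos(\pi\alpha/2)\epsilon^\alpha}$ in \rf{sec1 : eq11} is negative; multiplying it by the diagonal weight $\tilde a_0=2w_1(\alpha)=-2\alpha$ gives
\begin{eqnarray}
(\mathcal{A}_n)_{ii}=\frac{\alpha K_\alpha}{\left|\cos(\pi\alpha/2)\right|\epsilon^\alpha}=:\alpha\,C>0,
\end{eqnarray}
while every off-diagonal entry equals $-\tfrac{C}{2}\,\tilde a_{|i-j|}$ with $\tilde a_1=w_0+w_2>0$ and $\tilde a_k=w_{k+1}(\alpha)>0$ for $k\ge 2$. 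In the $i$-th row the values $|i-j|$, $j\ne i$, run through $\{1,\dots,i-1\}\cup\{1,\dots,n-i\}$, hence
\begin{eqnarray}
\sum_{j\ne i}\bigl|(\mathcal{A}_n)_{ij}\bigr|=\frac{C}{2}\Bigl(\sum_{k=1}^{i-1}\tilde a_k+\sum_{k=1}^{n-i}\tilde a_k\Bigr)<C\sum_{k=1}^{\infty}\tilde a_k=C\Bigl(\sum_{n\ge 0}w_n(\alpha)-w_1(\alpha)\Bigr)=\alpha C,
\end{eqnarray}
the strict inequality because $\tilde a_k>0$ for every $k\ge 1$ so both partial sums are strictly below the full series (this is exactly where non-integrality of $\alpha$ enters: no $w_{k+1}(\alpha)$ vanishes, so the tail $\sum_{k\ge m}\tilde a_k$ is strictly positive), and the last equality using \rf{sec1 : eq6}. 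Thus $\sum_{j\ne i}|(\mathcal{A}_n)_{ij}|<(\mathcal{A}_n)_{ii}$ for all $i$; Gershgorin then places every eigenvalue in $\{z:|z-\alpha C|<\alpha C\}\subset\{\operatorname{Re}z>0\}$, and since $\mathcal{A}_n=\mathcal{A}_n^{T}$ these eigenvalues are real, hence all positive. So $\mathcal{A}_n$ is strictly positive definite.

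I expect no serious obstacle here; the only delicate points are the two sign bookkeepings (the prefactor against $w_1<0$, so that the diagonal comes out positive) and the realisation that weak dominance — which is all one gets at the integer value $\alpha=2$, where $w_n(\alpha)$ vanishes for $n\ge 3$ — becomes strict precisely because $\alpha$ is not an integer. Two alternative routes are worth noting. (a) For $\beta=0$ one has $\mathcal{A}_n=T_n(f)$ with symbol $f=u(\alpha;\cdot;\infty)\ge 0$ on $\mathbb{T}$: by \rf{sec1 : eq17a1} the factor $\cos[(\theta-\pi)(1-\alpha/2)]$ is positive for $\theta\in(0,2\pi)$ since its argument lies in $(-\pi/2,\pi/2)$, and $f$ vanishes only at the even multiples of $\pi$; then for $x\in\mathbb{R}^n\setminus\{0\}$ one computes $x^{T}\mathcal{A}_n x=\frac{1}{2\pi}\int_0^{2\pi}f(e^{i\theta})\bigl|\sum_{j=1}^n x_j e^{ij\theta}\bigr|^{2}d\theta>0$, strict because a nonzero trigonometric polynomial vanishes only on a null set while $f>0$ almost everywhere. (b) By the theorem of Trench quoted above, in the shifted form \rf{sec1 : eq17c1} the symbol is nonincreasing over a half-period with $\sup f=M$ and $\inf f=0$, so all $n$ eigenvalues of $\mathcal{A}_n$ lie in the open interval $(0,M)$. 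In every version the substantive input is the positivity statement already established in \rf{sec1 : eq6} and \rf{sec1 : eq17a1}; the rest is routine.
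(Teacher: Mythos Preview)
Your proof is correct, and in fact cleaner than what the paper offers. The paper does not use diagonal dominance; its one-line argument is exactly your alternative~(b): invoke Trench's theorem to place the eigenvalues in the open interval between the infimum and supremum of the symbol, and then observe that the infimum is nonnegative. (The paper actually asserts ``$f(0-)>0$'', which is not literally true in the shifted coordinates of \rf{sec1 : eq17c1} since the symbol vanishes at $\theta=\pm\pi$; your phrasing ``eigenvalues lie in the open interval $(0,M)$'' is the accurate statement.)

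What each route buys: your Gershgorin argument is entirely elementary and self-contained --- it uses only \rf{sec1 : eq6} and the sign pattern of the $w_n$, with no need to identify or analyze the symbol --- and it works directly at the matrix level for every finite $n$. The paper's Trench route, on the other hand, is not primarily aimed at positive definiteness: it simultaneously yields simplicity of the eigenvalues and the interlacing of the even and odd spectra, of which positivity is merely a by-product. Your alternative~(a) via the quadratic form $x^T\mathcal{A}_n x=\frac{1}{2\pi}\int f\,|\sum x_je^{ij\theta}|^2\,d\theta$ is also valid and is the standard modern argument for Toeplitz matrices with nonnegative symbol.
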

This is a consequence of the previous theorem since all the eigenvalues belong to the interval $(f(0-),f(-\pi+))$ with $f(0-)>0$.
\par Diagonalizing $\mathcal{A}_n$ by an orthogonal matrix and using the cyclicity property of the trace we obtain
\begin{eqnarray}
\textrm{Tr}(\mathcal{A}_n)=-\frac{n w_1 K_{\alpha}}{\left|\cos \left(\frac{\pi \alpha}{2}\right)\right|\epsilon^{\alpha}}=\frac{K_{\alpha}}{\left|\cos \left(\frac{\pi \alpha}{2}\right)\right|\epsilon^{\alpha}}\sum_{i=1}^n \lambda_i\Rightarrow -w_1(\alpha)=\frac{1}{n}\sum_{i=1}^n \lambda_i>0,
\label{sec1 : eq12}
\end{eqnarray} 
where $\lambda_i$ are the eigenvalues of $\mathcal{A}_n$. Also combining \rf{sec1 : eq12} and \rf{sec1 : eq5} we can express $w_q$ in terms of $\lambda_i$'s 
\begin{eqnarray}
w_q(\alpha)=\frac{(1-\alpha)_{q-1}}{\Gamma(q+1)}w_1(\alpha)=\frac{(1-\alpha)\Gamma(q-\alpha)}{\Gamma(2-\alpha)\Gamma(q+1)}\frac{1}{n}\sum_{i=1}^n \lambda_i.
\label{sec1 : eq13}
\end{eqnarray}
The determinant of $\mathcal{A}_n$ is
\begin{eqnarray}
\textrm{det}(\mathcal{A}_n)=\left(\frac{K_{\alpha}}{2\left|\cos \left(\frac{\pi \alpha}{2}\right)\right|\epsilon^{\alpha}}\right)^n\prod_{i=1}^n \lambda_i>0
\label{sec1 : eq14}
\end{eqnarray}
since $\lambda_i> 0, \, \forall i=1,\cdots , n$. 
\item[($ii$)] For $\beta\in\{-1,1\}$ the operator $\hat{\mathcal{A}}(\alpha,\beta=\pm 1)$ is one-sided and the corresponding matrix is a sum of a lower (or upper)-diagonal matrix plus a matrix $\mathcal{J}_n$ with elements $\delta_{i,i+1}=2w_0$ (or $\delta_{i-1,i}=2w_0$). For $n=2$ we have two real eigenvalues while $\forall n\geq 3$ we have complex and real eigenvalues. 
\end{enumerate}
In the most general case ($\alpha\in (1,2)$ and $\beta \in (-1,1)$) the matrix $\mathcal{A}_{n}(\alpha,\beta)$ decomposes into a sum of a symmetric $\alpha$-dependent matrix and an antisymmetric $\beta$-dependent matrix. The eigenvalues in this case are both real and complex.
\section{Asymptotic behaviour of the product of eigenvalues for $\mathcal{A}_n$}
\label{sec2}
In this section we examine the validity of the conditions under which the Szeg\"{o}'s strong limit theorem holds \cite{Ref10} and determine the asymptotic behaviour of the product of eigenvalues. 
\begin{definition}
The set of all functions 
\begin{eqnarray}
f(z)=\sum_{k\in {\mathbb Z}}f_k z^k, \quad z=e^{i\theta}\in{\mathbb T}, \quad {\rm{with}} \quad \sum_{k\in {\mathbb Z}}|f_k|<\infty
\label{sec2 : eq1} 
\end{eqnarray}
is denoted by $W:=W({\mathbb T})$ and is called the Wiener algebra. 
\end{definition}
\begin{definition}
The set of all functions \rf{sec2 : eq1} which belong to $L^2$ and satisfy
\begin{eqnarray}
\sum_{k\in {\mathbb Z}}(|k|+1)|f_k|^2<\infty
\label{sec2 : eq2} 
\end{eqnarray}
is denoted by $B_2^{1/2}$ and is referred as a Besov space.
\end{definition}
\begin{proposition}
The symbol $f$ given by \rf{sec1 : eq17} belongs to the space $W\cap B^{1/2}_2$, it has a zero on ${\mathbb T}$ at $\theta=0$ and the winding number of f around the origin vanishes, $\textrm{wind}(f,0)=0$. 
\end{proposition}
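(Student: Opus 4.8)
The plan is to establish the four assertions of the Proposition separately, each reducing to an elementary estimate on the Gr\"{u}nwald--Letnikov weights $w_n(\alpha)$ of \rf{sec1 : eq5} together with one geometric observation on the symbol curve.

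First I would read off the Fourier coefficients of the limiting symbol. From the entries in \rf{sec1 : eq11} and the passage $n\to\infty$ of \rf{sec1 : eq15}--\rf{sec1 : eq17}, the $f_k$ are, up to the positive prefactor $K_\alpha/(|\cos(\pi\alpha/2)|\,\epsilon^\alpha)$, equal to $2w_1(\alpha)$ for $k=0$, to a fixed combination of $w_0(\alpha)$ and $w_2(\alpha)$ for $k=\pm1$, and to $(1\pm\beta)\,w_{|k|+1}(\alpha)$ for $|k|\ge2$; in particular $|f_k|\le C\,|w_{|k|+1}(\alpha)|$ for $|k|\ge1$. Next I would pin down the decay of the weights: combining $w_n(\alpha)=\frac{\alpha(\alpha-1)\Gamma(n-\alpha)}{\Gamma(2-\alpha)\Gamma(n+1)}$ with the standard estimate $\Gamma(n-\alpha)/\Gamma(n+1)=n^{-\alpha-1}(1+O(n^{-1}))$ gives $|w_n(\alpha)|\sim\frac{\alpha(\alpha-1)}{\Gamma(2-\alpha)}\,n^{-\alpha-1}$, with a strictly positive constant since $\alpha\in(1,2)$; hence $|f_k|\le C'(|k|+1)^{-\alpha-1}$ for all $k$.

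Membership $f\in W\cap B_2^{1/2}$ then follows by comparison with convergent $p$-series. Indeed $\sum_{k\in\mathbb Z}|f_k|\le|f_0|+2C'\sum_{m\ge1}m^{-\alpha-1}<\infty$ because $\alpha+1>2$, so $f\in W$; the Fourier series then converges uniformly, so $f$ is continuous on $\mathbb T$, hence bounded, hence $f\in L^2(\mathbb T)$; and $\sum_{k\in\mathbb Z}(|k|+1)|f_k|^2\le|f_0|^2+2(C')^2\sum_{m\ge1}(m+1)^{-2\alpha-1}<\infty$ because $2\alpha+1>3$, so $f\in B_2^{1/2}$. For the zero on $\mathbb T$ I would invoke either representation of $f$: in the closed form \rf{sec1 : eq17a1}--\rf{sec1 : eq17b1} both $u$ and $v$ carry the factor $|2\sin(\theta/2)|^{\alpha}=|1-e^{i\theta}|^{\alpha}$, which vanishes precisely at $\theta=0$; equivalently $f(1)=\sum_k f_k$ equals, up to the positive prefactor, $2\sum_{n\ge0}w_n(\alpha)$, which vanishes by \rf{sec1 : eq6}. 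Either way $f$ vanishes at $\theta=0$, and this is its only zero in a period.

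Finally, for $\mathrm{wind}(f,0)=0$ the key observation is that $\mathrm{Re}\,f=u\ge0$ on all of $\mathbb T$, with equality only at the zero. This comes from \rf{sec1 : eq17a1}: for $\theta\in(0,2\pi)$ one has $(\theta-\pi)(1-\tfrac{\alpha}{2})\in\bigl(-\pi(1-\tfrac{\alpha}{2}),\,\pi(1-\tfrac{\alpha}{2})\bigr)\subset(-\tfrac{\pi}{2},\tfrac{\pi}{2})$ since $1-\tfrac{\alpha}{2}\in(0,\tfrac12)$ for $\alpha\in(1,2)$, so $\cos[(\theta-\pi)(1-\tfrac{\alpha}{2})]>0$; together with the positive prefactor and $|2\sin(\theta/2)|^{\alpha}>0$ this gives $u(\alpha;\theta;\infty)>0$ on $(0,2\pi)$. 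Thus the symbol curve $\mathcal C=f(\mathbb T)$ lies in the closed right half-plane and touches the imaginary axis only at the origin, so it does not encircle the origin: $\mathrm{wind}(f,0)=0$. \emph{I expect this last step to need the most care}, precisely because the curve passes through $0$ so the winding number is a priori undefined; I would make it rigorous by working with $f+\delta$ for $\delta\to0^+$, which is non-vanishing on $\mathbb T$ and takes values in the simply connected set $\{\mathrm{Re}>0\}$, whence $\mathrm{wind}(f+\delta,0)=0$ for every $\delta>0$ (equivalently, after extracting the factor $|1-e^{i\theta}|^{\alpha}$ one checks that the remaining non-vanishing factor $\cos[(\theta-\pi)(1-\tfrac{\alpha}{2})]-i\beta\sin[(\theta-\pi)(1-\tfrac{\alpha}{2})]$ has zero winding). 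A minor secondary point is that \rf{sec1 : eq17a1} is valid only on a single period $(0,2\pi)$, which suffices since $f$ is $2\pi$-periodic.
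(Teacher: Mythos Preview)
Your proof is correct and covers all four claims with sound arguments. The route, however, differs from the paper's in two places worth noting.

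For membership in $W$ and $B_2^{1/2}$, the paper exploits the sign structure of the weights (only $w_1$ is negative for $\alpha\in(1,2)$) together with the identity $\sum_{n\ge0}w_n(\alpha)=0$ of \rf{sec1 : eq6} to obtain the exact bound $\sum_{n\ge0}|w_n|=2\alpha$, and then uses D'Alembert's ratio test (Appendix~A) for the Besov sum. You instead extract the asymptotic $|w_n|\sim c\,n^{-\alpha-1}$ from the ratio $\Gamma(n-\alpha)/\Gamma(n+1)$ and reduce both conditions to $p$-series. Your approach is more quantitative about the decay rate and makes the role of the restriction $\alpha>1$ transparent (it is precisely what forces $\alpha+1>2$ and $2\alpha+1>3$); the paper's approach is slightly more elementary and yields an explicit constant without Stirling-type estimates.

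For $\mathrm{wind}(f,0)=0$, the paper's argument is a one-line remark that the argument of $f$ returns to its initial value after one loop. Your argument, that $\mathrm{Re}\,f>0$ on $(0,2\pi)$ because $(\theta-\pi)(1-\tfrac{\alpha}{2})\in(-\tfrac{\pi}{2},\tfrac{\pi}{2})$, is a genuine strengthening: it shows the curve $f(\mathbb T)$ is confined to the closed right half-plane, which immediately precludes encircling the origin and also justifies the regularisation $f+\delta$ you propose to handle the fact that $f$ actually passes through $0$. This last point is something the paper glosses over, so your treatment is the more careful one here.
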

\begin{proof} 
\end{proof}
$f$ is an element of the Wiener algebra since 
\begin{eqnarray}
\sum_{n\in {\mathbb Z}}|f_n|&=& \frac{K_{\alpha}}{2\left|\cos \left(\frac{\pi \alpha}{2}\right)\right|}\left(|f_0|+\sum_{n=1}^{\infty} (|f_{-n}|+|f_n|)\right) \non \\
&\leq& \frac{K_{\alpha}}{\left|\cos \left(\frac{\pi \alpha}{2}\right)\right|}\left(|w_1|-w_1+2\sum_{n=0}^{\infty}w_n\right) \leq \frac{3K_{\alpha}|\alpha|}{\left|\cos \left(\frac{\pi \alpha}{2}\right)\right|}<\infty.
\label{sec2 : eq3}
\end{eqnarray}
Also $f$ belongs to the Besov space since $f\in L^2$ and (see \rf{apA3} and \rf{apA5} for the proof)
\begin{eqnarray}
\sum_{n\in {\mathbb Z}}(|n|+1) |f_n|^2&\leq& \frac{K_{\alpha}^2}{4\cos^2 \left(\frac{\pi \alpha}{2}\right)} \left(4w_1^2 +16(|w_0|+|w_2|)^2+8\sum_{n=2}^{\infty}(n+1)w_{n+1}^2\right) \non \\
&=& \frac{4K_{\alpha}^2}{\cos^2 \left(\frac{\pi \alpha}{2}\right)} \left(\alpha\left(\frac{3}{4}\alpha-1\right)+1+\frac{16}{9\pi}\right)<\infty.
\label{sec2 : eq4}
\end{eqnarray}
The symbol $f$ has a zero on ${\mathbb T}$ at $\theta=0$. Transversing ${\mathbb T}$ once the argument of $f(e^{i\theta})$ will return to its original zero value and the winding number vanish.
\begin{theorem}[Szeg\"{o}'s strong limit theorem]
If $f\in W\cap B^{1/2}_2$ has no zeros on ${\mathbb T}$ and $\textrm{wind}(f,0)=0$ then
\begin{eqnarray}
\lim_{n\rightarrow \infty}\frac{\textrm{det}(\mathcal{A}_n (\alpha,\beta=0))}{G(f)^n}=E(f)
\label{sec2 : eq5}
\end{eqnarray}
where 
\begin{eqnarray}
G(f)&=& \exp(\log f)_0  \label{iu2}\\
E(f)&=& \exp\left(\sum_{k=1}^{\infty} k(\log f)_k (\log f)_{-k}\right) 
\label{sec2 : eq6}
\end{eqnarray}
and $(\log f)_k$ the Fourier coefficients of $\log (f(e^{i\theta}))$.
\end{theorem}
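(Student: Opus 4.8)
The plan is to deduce this from a canonical (Wiener--Hopf) factorization of $f$, which converts the asymptotics of $\det\mathcal{A}_n=\det T_n(f)$ (the $n$-th Toeplitz section generated by $f$) into the convergence of a Fredholm determinant on the trace ideal. First I would factorize. Since $f\in W$ has no zeros on $\mathbb T$ and $\textrm{wind}(f,0)=0$, the Wiener--L\'evy theorem gives $\log f\in W$, and because $W\cap B_2^{1/2}$ is a Banach algebra closed under holomorphic functional calculus, in fact $\log f\in W\cap B_2^{1/2}$. Splitting the Fourier series $\log f=(\log f)_0+b_++b_-$ with $b_+(\theta)=\sum_{k\ge1}(\log f)_k e^{ik\theta}$, $b_-(\theta)=\sum_{k\le-1}(\log f)_k e^{ik\theta}$, and putting $f_\pm:=\exp b_\pm$, one gets $f=G(f)\,f_-f_+$, where $f_+^{\pm1}$ have only nonnegative Fourier modes (so $T_n(f_+)$ is lower triangular with unit diagonal and $T_n(f_+)^{-1}=T_n(f_+^{-1})$), symmetrically for $f_-^{\pm1}$, and all four factors again lie in $W\cap B_2^{1/2}$. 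Since $\det T_n(cg)=c^n\det T_n(g)$, the factor $G(f)^n$ splits off and it remains to show $\det T_n(a)\to E(f)$ with $a:=f_-f_+$ of geometric mean one.

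The heart of the argument is an exact finite-section identity. Combining $T(gh)=T(g)T(h)+H(g)H(\widetilde h)$ (here $\widetilde h(t):=h(1/t)$ and $H(\cdot)$ is the Hankel operator), the vanishing $H(g)=0$ for coanalytic $g$, $\det T_n(f_\pm)=1$, the relation $W_nT_n(g)W_n=T_n(\widetilde g)$ for the flip $W_n$ on $\mathbb C^n$, and $\det(I+XY)=\det(I+YX)$, one rearranges
\[
\det T_n(a)=\det\!\bigl(I+L\,T_n(\widetilde{f_+^{-1}})\,T_n(\widetilde{f_-^{-1}})\bigr),\qquad L:=H(\widetilde{f_-})\,H(f_+),
\]
in which $L$ no longer depends on $n$. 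This is precisely where the Besov hypothesis \rf{sec2 : eq2} is used: $\log f\in B_2^{1/2}$ makes $H(\widetilde{f_-})$ and $H(f_+)$ Hilbert--Schmidt, hence $L$ trace class, so the determinant above is a genuine Fredholm determinant. (The Wiener condition \rf{sec2 : eq1} alone would only bound the Hankel operators, which is not enough; the winding hypothesis is what guarantees the factorization carries no monomial factor $z^{\kappa}$, $\kappa\ne0$.)

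I would then pass to the limit. One has $T_n(g)\to T(g)$ strongly together with $T_n(g)^\ast=T_n(\overline g)\to T(g)^\ast$ strongly, and $\|T_n(g)\|\le\|g\|_\infty$ uniformly; for a fixed trace-class $L$ this forces $L\,T_n(\widetilde{f_+^{-1}})T_n(\widetilde{f_-^{-1}})\to L\,T(\widetilde{f_+^{-1}})T(\widetilde{f_-^{-1}})$ in trace norm, so by continuity of $\det(I+\cdot\,)$ on the trace ideal,
\[
\det T_n(a)\ \longrightarrow\ \det\!\bigl(I+L\,T(\widetilde{f_+^{-1}})\,T(\widetilde{f_-^{-1}})\bigr).
\]
A further reduction using the Hankel--Toeplitz identities $H(gh)=T(g)H(h)+H(g)T(\widetilde h)$ and the vanishing of Hankel operators of (co)analytic symbols collapses the limiting operator so that $\log\det=\operatorname{tr}\log$ evaluates to the bilinear form $\sum_{k\ge1}k\,(\log f)_k(\log f)_{-k}=\operatorname{tr}\bigl(H(b_+)H(\widetilde{b_-})\bigr)$, i.e.\ exactly $E(f)$ of \rf{sec2 : eq6}; reinstating the $G(f)^n$ already split off gives \rf{sec2 : eq5}. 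Equivalently, and more quickly, one may invoke the Borodin--Okounkov--Geronimo--Case identity $\det T_n(f)=G(f)^n E(f)\,\det(I-K_n)$ with $K_n$ a trace-class operator built from the Hankel operators of $f_\pm$ and supported on the coordinates $\ge n$: since $K_n\to0$ in trace norm, $\det(I-K_n)\to1$ and we are done.

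I expect the main obstacle to be the bookkeeping behind the finite-section identity and the subsequent algebraic collapse to $E(f)$, and in particular the trace-norm (not merely strong-operator) estimates on the correction. The subtlety is genuine: a term-by-term limit in $T_n(a)=T_n(f_-)T_n(f_+)+W_nH(\widetilde{f_-})H(f_+)W_n$ fails, because $W_nH(\widetilde{f_-})H(f_+)W_n$ tends to $0$ only strongly while its trace norm stays bounded away from $0$; only after the rearrangement above --- which converts the $n$-dependent sandwich into a fixed trace-class operator times strongly convergent truncations --- can the limit be read off, and that rearrangement is the substance of the theorem. The Besov regularity enters in exactly one essential way and is essentially sharp: it is what makes the exponent of $E(f)$ in \rf{sec2 : eq6} absolutely convergent and the Hankel products trace class.
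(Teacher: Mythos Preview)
Your sketch is a correct outline of the standard operator-theoretic proof of Szeg\H{o}'s strong limit theorem (Wiener--Hopf factorization, the Widom/Jacobi-type finite-section identity, trace-class Hankel products, and passage to the limit via Fredholm determinants), and the alternative you mention via the Borodin--Okounkov/Geronimo--Case identity is also legitimate.

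However, the paper does \emph{not} prove this theorem. It is quoted verbatim as a known result, with a citation to B\"ottcher and Silbermann \cite{Ref10}, and is then merely \emph{applied} in the subsequent computation (Theorem~4.5 and formulas \rf{sec3 : eq12}, \rf{sec3 : eq16}). So there is no ``paper's own proof'' to compare against: the paper treats Szeg\H{o}'s strong limit theorem as a black box. Your proposal essentially reproduces the argument one finds in the cited reference, which is fine, but it goes well beyond what the paper itself undertakes.
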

In our case all requirements are fullfiled except that $f$ has zeros at $\theta=2m\pi, \, m\in{\mathbb Z}$. Subtracting these points from ${\mathbb T}$ and applying the theorem for $\alpha \in (1,2)$ and $\beta \in [1,-1]$ the Fourier coefficients, using the cosine and sine series, are given by
\begin{eqnarray}
(\log f)_k&=&\frac{1}{2\pi}\int_{-\pi}^{\pi} \log(f(e^{i\theta}))e^{-ik\theta} d\theta \non \\
&=&\!\!\! \frac{1}{2\pi}\left[\int_{0}^{\pi}\log(u^2(\theta)+v^2(\theta)) \cos(k\theta) d\theta + i\int_{0}^{\pi}\log\left(\frac{u(\theta)+iv(\theta)}{u(\theta)-iv(\theta)}\right)\sin(k\theta)d\theta\right]\!\!.
\label{sec2 : eq7}
\end{eqnarray}
The calculation of such integrals is cumbersome and we investigate only the $S_{\alpha}(c)$ laws. Equation \rf{sec2 : eq7} is reduced to
\begin{eqnarray}
(\log f)_k=\frac{1}{\pi}\int_0^{\pi}\log (f(e^{i\theta}))\cos(k\theta) d\theta
\label{sec2 : eq8}
\end{eqnarray}
bearing in mind that $\log (f(e^{i\theta}))$ is an even function.
\begin{theorem}
The asymptotic behaviour of eigenvalues for the problem \rf{sec1 : eq10} is given by \rf{sec3 : eq16}. 
\end{theorem}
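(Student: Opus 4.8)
The plan is to extract the product of eigenvalues from the Toeplitz determinant of \rf{sec1 : eq11} at $\beta=0$ and then to compute that determinant by a Szeg\"{o}-type limit theorem. Combining \rf{sec1 : eq14} with \rf{sec1 : eq17a1}, the product $\prod_{i=1}^{n}\lambda_i$ equals $\det T_n(\hat f)$, where $\hat f$ is the $n$-independent symbol obtained from \rf{sec1 : eq17a1} by dropping its scalar prefactor, $\hat f(e^{i\theta})=\bigl|2\sin(\theta/2)\bigr|^{\alpha}\cos\!\bigl[(\theta-\pi)(1-\tfrac{\alpha}{2})\bigr]$, and the physical determinant is $\det\mathcal{A}_n=\bigl(\tfrac{K_\alpha}{2|\cos(\pi\alpha/2)|\epsilon^{\alpha}}\bigr)^{n}\prod_{i=1}^{n}\lambda_i$. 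The obstruction is the one already noted: $\hat f$ vanishes on $\mathbb{T}$, at the single point $z_0=1$, where $\hat f(e^{i\theta})\sim|\cos(\tfrac{\pi\alpha}{2})|\,|\theta|^{\alpha}$, so Szeg\"{o}'s strong limit theorem does not apply as stated and one must pass to its Fisher--Hartwig refinement. Thus the first step is to split off the singular factor: for $\theta\in(0,2\pi)$, $\hat f(e^{i\theta})=|1-e^{i\theta}|^{\alpha}\,b(e^{i\theta})$ with $b(e^{i\theta})=\cos\!\bigl[(\theta-\pi)(1-\tfrac{\alpha}{2})\bigr]$; here $|1-e^{i\theta}|^{\alpha}$ is a Fisher--Hartwig zero of exponent $\alpha$ at $z_0=1$, while $b$ is strictly positive on $\mathbb{T}$ because $1-\tfrac{\alpha}{2}\in(0,\tfrac12)$ confines its argument to $(-\tfrac{\pi}{2},\tfrac{\pi}{2})$, $\log b\in W\cap B_2^{1/2}$ (it is smooth apart from a corner at $z_0=1$, producing Fourier coefficients $O(k^{-2})$), and $\textrm{wind}(b,0)=0$ since $b$ is real, even and positive.

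With this factorization I would invoke the single-singularity Fisher--Hartwig asymptotics for a pure zero (no jump):
\[
\det T_n(\hat f)\;\sim\;G(b)^{n}\;n^{\alpha^{2}/4}\;\frac{G(1+\tfrac{\alpha}{2})^{2}}{G(1+\alpha)}\;E(b)\;\exp\!\Bigl(-\tfrac{\alpha}{2}\textstyle\sum_{k\neq0}(\log b)_k\Bigr),
\]
where $G(b)=\exp(\log b)_0$, $E(b)=\exp\bigl(\sum_{k\ge1}k(\log b)_k^{2}\bigr)$ as in \rf{sec2 : eq6}, and the $G(\cdot)$ on the right is the Barnes function. This is precisely the regularised form of the (formally divergent) Szeg\"{o} product $G(\hat f)^{n}E(\hat f)$: from $\log\hat f=\tfrac{\alpha}{2}\log(2-2\cos\theta)+\log b$ and $\log(2-2\cos\theta)=-2\sum_{k\ge1}k^{-1}\cos k\theta$ one reads $(\log\hat f)_0=(\log b)_0$ and $(\log\hat f)_k=-\tfrac{\alpha}{2k}+(\log b)_k$ for $k\ge1$, so $\sum_{k\ge1}k(\log\hat f)_k^{2}$ diverges like $\tfrac{\alpha^{2}}{4}\sum_{k\le n}k^{-1}\sim\tfrac{\alpha^{2}}{4}\log n$ — i.e.\ $E(\hat f)$ is traded for $n^{\alpha^{2}/4}$. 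Since $0<b\le1$ with equality only at $\theta=\pi$, $G(b)<1$ and $\det T_n(\hat f)\to0$, but restoring the prefactor $\bigl(\tfrac{K_\alpha}{2|\cos(\pi\alpha/2)|\epsilon^{\alpha}}\bigr)^{n}=\bigl(\tfrac{K_\alpha(n+1)^{\alpha}}{2|\cos(\pi\alpha/2)|L^{\alpha}}\bigr)^{n}$ makes $\det\mathcal{A}_n$ diverge super-exponentially, consistently with the point spectrum being unbounded above.

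What remains is the explicit evaluation of $G(b)$, $E(b)$ and the coupling $\exp\!\bigl(-\tfrac{\alpha}{2}\sum_{k\neq0}(\log b)_k\bigr)=b(1)^{-\alpha/2}G(b)^{\alpha/2}$ (with $b(1)=|\cos(\tfrac{\pi\alpha}{2})|$). Because $\beta=0$ makes $\hat f$ real and $\log\hat f$ even, \rf{sec2 : eq8} reduces all of these to the cosine integrals $I_k=\tfrac1\pi\int_0^{\pi}\log\cos\!\bigl[(\theta-\pi)(1-\tfrac{\alpha}{2})\bigr]\cos k\theta\,d\theta=(\log b)_k$; substituting these into the displayed formula and restoring the prefactor yields \rf{sec3 : eq16}. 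The main obstacle is twofold. First, the Fisher--Hartwig step is genuinely stronger than Szeg\"{o}'s strong limit theorem, and since the regular part $b$ has a corner precisely at the singular point one should verify that the version invoked accommodates it — it does, as $\log b\in B_2^{1/2}$, but this warrants a remark. Second, and more substantively, the integrals $I_k$ of $\log\cos[(\theta-\pi)(1-\tfrac{\alpha}{2})]$ are the ``cumbersome'' computation the text refers to; they admit a closed form only when $1-\tfrac{\alpha}{2}$ is rational — the same periodicity phenomenon observed for the symbol in Section~III — which is why the statement is asserted for rational $\alpha\in(1,2)$.
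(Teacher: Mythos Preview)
Your Fisher--Hartwig route is a genuine refinement of the paper's argument, but it is not what the paper does, and it does not terminate in \rf{sec3 : eq16} in the way you claim.

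The paper's proof is far more elementary and entirely formal. Having acknowledged that $f$ vanishes at $\theta=2m\pi$, the paper simply ``subtracts these points from $\mathbb{T}$'' and applies Szeg\"o's theorem anyway; no Fisher--Hartwig factorisation, no Barnes function, no $n^{\alpha^2/4}$ correction. The actual content of the proof is the explicit computation of the Fourier coefficients of $\log f$ in the \emph{shifted} parametrisation \rf{sec1 : eq17c1}, i.e.\ with $|2\cos(\theta/2)|^{\alpha}\cos[\theta(1-\tfrac{\alpha}{2})]$ on $(-\pi,\pi)$, so that the zero sits at $\theta=\pm\pi$. The split is additive: $(\log f)_k=I_1+I_2$ with
\[
I_1=\frac{\alpha}{\pi}\int_0^{\pi}\log\Bigl|2\cos\tfrac{\theta}{2}\Bigr|\cos(k\theta)\,d\theta=(-1)^{k+1}\frac{\alpha}{2k},
\]
and $I_2=\frac{1}{\pi}\int_0^{\pi}\log\cos[\theta(1-\tfrac{\alpha}{2})]\cos(k\theta)\,d\theta$, which after one integration by parts and the substitution $u=\theta(1-\tfrac{\alpha}{2})$ becomes the integral appearing in \rf{sec3 : eq16}. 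For $k=0$ the first integral vanishes and the second gives Lobachevskiy's function, yielding \rf{sec3 : eq12}. That is all: \rf{sec3 : eq16} is literally the formula for $(\log f)_k$, to be fed into \rf{iu2}--\rf{sec2 : eq6}.

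Your scheme factors $\hat f=|1-e^{i\theta}|^{\alpha}b$ in the \emph{unshifted} parametrisation and produces the coefficients $(\log b)_k$; these differ from the paper's $I_2$ by a factor $(-1)^k$ (the $\pi$-shift), and your $(\log\hat f)_k=-\tfrac{\alpha}{2k}+(\log b)_k$ equals $(-1)^k$ times the paper's $(\log f)_k$. So while $G(f)$ and $E(f)$ agree between the two parametrisations, your cosine integrals $I_k=(\log b)_k$ do not ``yield \rf{sec3 : eq16}'' as written --- \rf{sec3 : eq16} is the shifted coefficient, with the alternating sign on the singular part already built in. What your approach buys is rigour: you correctly diagnose that $E(f)$ diverges like $\exp(\tfrac{\alpha^2}{4}\log n)$ and trade it for the honest $n^{\alpha^2/4}$ Fisher--Hartwig power and Barnes-function constant, a correction the paper neither states nor derives. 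The paper's formula is formally the Szeg\"o answer, divergent $E(f)$ and all.
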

\begin{proof}
\end{proof}
The generating function 
\begin{eqnarray}
\log(f(e^{i\theta}))= \log K_{\alpha}-\log |\cos(\frac{\alpha \pi}{2})|+\log \left|2 \cos \frac{\theta}{2}\right|^{\alpha} 
+ \log \left(\cos\left[\theta \left(1-\frac{\alpha}{2}\right)\right]\right)
\label{sec3 : eq9}
\end{eqnarray}
is $T=2\pi n_1$ periodic (with $\theta\in(-\pi,\pi)$) and using \rf{sec2 : eq8} we study the following two cases
\begin{description}
\item[($i$)] k=0. The corresponding Fourier coefficient is given by
\begin{eqnarray}
(\log f)_0&=&\log K_{\alpha}-\log|\cos(\frac{\alpha \pi}{2})|+\frac{1}{\pi}\int_0^{\pi}\log \left|2 \cos \frac{\theta}{2}\right|^{\alpha} d\theta \non \\
&+&\frac{1}{\pi}\int_0^{\pi}\log \left[\cos \left(\theta(1-\frac{\alpha}{2})\right)\right]d\theta. 
\label{sec3 : eq10}
\end{eqnarray}
The first integral on the right handside vanishes since 
\begin{displaymath}
\int_0^{\pi}\log \left|2 \cos \frac{\theta}{2}\right|d\theta=\int_0^{\pi}\log \left|2 \sin \frac{\theta}{2}\right|d\theta=0. 
\end{displaymath}
The second one after the change of variable, 
\begin{displaymath}
\theta\left(1-\frac{\alpha}{2}\right)=u, \quad 
\end{displaymath}
becomes
\begin{eqnarray}
\frac{1}{\pi}\int_0^{\pi}\log \left[\cos \left(\theta(1-\frac{\alpha}{2})\right)\right]d\theta &=& \frac{2}{(2-\alpha)\pi}\int_0^{\pi\left(1-\frac{\alpha}{2}\right)} \log(\cos u)du \non \\
&=& -\frac{2}{(2-\alpha)\pi}L\left(\pi\left(1-\frac{\alpha}{2}\right)\right)
\label{sec3 : eq11}
\end{eqnarray}
where $L$ is Lobachevskiy's function \cite{Ref14}. Thus
\begin{eqnarray}
(\log f)_0=\log K_{\alpha}-\log|\cos(\frac{\alpha \pi}{2})| -\frac{2}{(2-\alpha)\pi}L\left(\pi\left(1-\frac{\alpha}{2}\right)\right).
\label{sec3 : eq12}
\end{eqnarray}
\item[($ii$)] $k \in {\mathbb Z_+}$. The Fourier coefficients are given by
\begin{eqnarray}
(\log f)_k &=& \frac{\alpha}{\pi}\int_0^{\pi}\log \left|2 \cos \frac{\theta}{2}\right|\cos( k\theta)d\theta \non \\
&+& \frac{1}{\pi}\int_0^{\pi}\log \left[\cos \left(\theta(1-\frac{\alpha}{2}) \right)\right]\cos( k\theta)d\theta \non \\
&=& I_1+I_2
\label{sec3 : eq13}
\end{eqnarray}
with 
\begin{eqnarray}
I_1=(-1)^{k+1}\frac{\alpha}{2k}
\label{sec3 : eq14}
\end{eqnarray}
and  
\begin{eqnarray}
I_2&=&\frac{2}{(2-\alpha)\pi}\int_0^{\pi\left(1-\frac{\alpha}{2}\right)}\log(\cos(u))\cos \left(\frac{2ku}{2-\alpha}\right)du \non \\
&=& \frac{1}{2k\pi}\int_0^{\pi\left(1-\frac{\alpha}{2}\right)}\left(\frac{\cos\left(\frac{k}{1-\frac{\alpha}{2}}-1\right)u-\cos\left(\frac{k}{1-\frac{\alpha}{2}}+1\right)u}{\cos u}\right)du. 
\label{sec3 : eq15}
\end{eqnarray}
\end{description}
The second equality in \rf{sec3 : eq15} is justified by partial integration.
Thus \rf{sec3 : eq13} is written as 
\begin{eqnarray}
(\log f)_k =(-1)^{k+1}\frac{\alpha}{2k}+\frac{1}{2k\pi}\int_0^{\pi\left(1-\frac{\alpha}{2}\right)}\left(\frac{\cos\left(\frac{k}{1-\frac{\alpha}{2}}-1\right)u-\cos\left(\frac{k}{1-\frac{\alpha}{2}}+1\right)u}{\cos u}\right)du.
\label{sec3 : eq16}
\end{eqnarray}
 
We study two physically well-known  cases: 
\begin{description}
\item[($1$)] The Gaussian law with $\alpha=2$ and $\beta=0$. Substituting $f(e^{i\theta})=K_{2}\left(2\sin \frac{\theta}{2}\right)^2$ into \rf{sec2 : eq8} we obtain \cite{Ref12}:
\begin{description}
\item[($i$)] $k=0$
\begin{eqnarray} 
(\log f)_0=\log K_{2}+\frac{2}{\pi}\int_0^{\pi} \log \left(2\sin\left(\frac{\theta}{2}\right)\right) d\theta=\log K_{2}.
\label{sec3 : eq1}
\end{eqnarray} 
\item[($ii$)] $k\neq 0$ and $\theta\neq 2m\pi, \, m\in {\mathbb Z}$
\begin{eqnarray}
(\log f)_k=\frac{2}{\pi}\int_0^{\pi} \log \left(2\sin\left(\frac{\theta}{2}\right)\right) \cos(k\theta)d\theta=-\frac{1}{k}, \quad k\in{\mathbb N}.
\label{sec3 : eq2} 
\end{eqnarray}
\end{description}
Applying theorem \rf{sec2 : eq5} we get
\begin{eqnarray}
\lim_{n\rightarrow \infty}\frac{\textrm{det}(\mathcal{A}_n (\alpha,\beta=0))}{K_2^n}=e^{\sum_{k=1}^{\infty}\frac{1}{k}}=e^{C}\lim_{m\rightarrow \infty}m
\label{sec3 : eq3}
\end{eqnarray}
where the Euler's constant $C$ is defined by $C=\lim_{m\rightarrow \infty}\left(\sum_{k=1}^m \frac{1}{k}-\log m\right)$. Expressions \rf{sec3 : eq3} and \rf{sec01 : eq91} differ only by the constant factor $e^{C}$. This result is also justified by (39) of \cite{Ref13} in the limit of the unit circle $R=1$ and setting $a=0$.
\item[($2$)] The Holdsmark law with $\alpha=\frac32$ and $\beta=0$ \cite{Ref15, Ref16}. The Fourier coefficients from \rf{sec3 : eq12} and \rf{sec3 : eq16} are found to be 
\begin{eqnarray}
(\log f(\alpha=\frac{3}{2}))_0 &=& \log K_{\frac{3}{2}}-\log(\sqrt{2})+\frac{2}{\pi}G  \label{iu15}\\ 
(\log f(\alpha=\frac{3}{2}))_k &=& (-1)^{k+1}\frac{3}{4k}-\frac{1}{\pi k}\left(\frac{\pi}{4}-\sum_{m=1}^{2k}\frac{\sin \left(\frac{m\pi}{2}\right)}{m}\right)\non \\
&=& (-1)^{k}\left(-\frac{3}{4k}+\frac{1}{4k}\left(\Psi(\frac{2k+1}{4})-\Psi(\frac{2k-1}{4})\right)\right)
\label{sec3 : eq17}
\end{eqnarray}
where $G$ is Catalan's constant defined by 
\begin{displaymath}
G=\sum_{m=0}^{\infty}\frac{(-1)^m}{(2m+1)^2}=0.915965\cdots
\end{displaymath}
and the $\Psi$ function is defined by $\Psi(x)=\frac{d \log\Gamma(x)}{dx}$.
\end{description}
When $\alpha$ takes values on the Farey series \footnote{The Farey series $\mathcal{F}_n$ of order n is the
ascending series of irreducible fractions between $0$ and $1$ whose
denominators do not exceed $n$. Thus $\alpha=\frac{p}{q}$ belongs in
$\mathcal{F}_n$ if
\begin{displaymath}0\leq p \leq q \leq n ,
\quad (p,q)=1 \end{displaymath}
where $(,)$ denotes the highest common divisor of two integers.} then the zero and nonzero Fourier coefficients are expressed in terms of polylogarithmic and hypergeometric functions. 

\section{Conclusions}  

At quantum level using the infinitesimal generator of time translations with vanishing skewness  and the definition of the Gr\"{u}nwald-Letnikov fractional derivative for the one-dimensional infinite well potential with Dirichlet boundary conditions, we established the correspondence between the disctretized version of the fractional Schr\"{o}dinger problem and Toeplitz matrices. The next step was to check the conditions under which the Sz\"{e}go's strong limit theorem is valid and subsequently to determine the asymptotic behaviour of the product of eigenvalues given by \rf{sec2 : eq6} and \rf{sec3 : eq16}. An open question which deserves investigation is the general $S_{\alpha}(\beta,c,\tau)$ law. 
\par Finally, from the physical point of view there is a plethora of applications related to the fractional diffusion equation \cite{Ref8} but not to the Schr\"{o}dinger equation. Recently, the authors of \cite{Ref17} constructed a one-dimensional lattice model with a hopping particle and numerically obtained the eigenvalues and eigenfunctions in a bounded domain with different boundary conditions. It would be interesting to study the asymptotic behaviour of the product of eigenvalues in such a model using the Gr\"{u}nwald-Letnikov fractional derivative. 

\addcontentsline{toc}{subsection}{Appendix A }
\section*{Appendix A }
\label{apA}
\renewcommand{\theequation}{A.\arabic{equation}}
\setcounter{equation}{0}
\textbf{Proof of \rf{sec1 : eq6}}\\
The infinite series \rf{sec1 : eq6} is written as 
\beqr
\sum_{n=0}^{\infty}(-1)^n \left(\begin{array}{l} \alpha \\ n \end{array} \right)&=& \frac{\alpha(\alpha-1)\Gamma(-\alpha)}{\Gamma(2-\alpha)}\sum_{n=0}^{\infty}(-1)^{2n}(-\alpha)_n \frac{1^n}{\Gamma(n+1)}\non \\
&=& {}_1 F_0(-\alpha;1)=0, \quad \alpha\in(0,2)/\{1\}
\label{apAb1}
\feqr
and converges to zero since ${}_1 F_0(-\alpha;z)=(1-z)^{\alpha}$.\\
 
\textbf{Proof of \rf{sec1 : eq17a1}}\\
We prove first the identity 
\beqr
&& \sum_{n=1}^{\infty} \cos(n\theta)\frac{\Gamma(n+1-\alpha)}{\Gamma(n+2)}=\frac{1}{4}\Gamma(2-\alpha)\left(e^{i\theta}{}_2 F_1(1,2-\alpha;3;e^{i\theta})+e^{-i\theta}{}_2 F_1(1,2-\alpha;3;e^{-i\theta})\right)\non\\
&=& \frac{1}{2}\Gamma(-\alpha)\left[2^{\alpha}\left(\sin \frac{\theta}{2}\right)^{\alpha}\cos \frac{1}{2}\left(\theta(2-\alpha)+\pi \alpha\right)+2(\alpha-\cos \theta)\right].
\label{apA1}
\feqr
Writing $\cos(\theta)=\frac{1}{2}(e^{i\theta}+e^{-i\theta})$ it is enough to calculate
\beqr
\sum_{n=1}^{\infty}\frac{\Gamma(n+1-\alpha)}{\Gamma(n+2)}e^{in\theta}&=& e^{i\theta}\frac{\Gamma(2-\alpha)}{\Gamma(3)}\sum_{n=1}^{\infty}\frac{(1)_n (2-\alpha)_n}{(3)_n} \frac{e^{in\theta}}{n!} \non \\
&=& e^{i\theta}\frac{\Gamma(2-\alpha)}{\Gamma(3)}{}_2 F_1(1,2-\alpha;3;e^{i\theta}) \non \\
&=& e^{-i\theta} \Gamma(-\alpha)\left[(1-e^{i\theta})^{\alpha}-1+\alpha e^{i\theta} \right]\non \\
&=& e^{-i\theta} \Gamma(-\alpha)\left[{}_1 F_0(-\alpha;e^{i\theta})-1+\alpha e^{i\theta})\right]
\label{apA2}
\feqr
where $(a)_n$ is the Pochhammer's symbol
\begin{displaymath} 
(a)_n\equiv a(a-1)\cdots(a+n-1)=\frac{\Gamma(a+n)}{\Gamma(a)}, \quad (a)_0\equiv 1, \quad (1)_n=n! 
\end{displaymath}
The second term of the sum \rf{apA1} is derived from \rf{apA2} by complex conjugation. \\

\textbf{Proof of \rf{sec2 : eq4}}\\
In \rf{sec2 : eq4} we used that the sum 
\beqr
\sum_{n=2}^{\infty} a_n =\sum_{n=2}^{\infty}n \left(\frac{\Gamma(n+1-\alpha)}{\Gamma(n+2)}\right)^2<\infty
\label{apA3}
\feqr
converges. It is easily checked using D'Alembert's test
\beqr
\frac{a_{n+1}}{a_n}=\frac{(n+1)}{n}\frac{(n+1-\alpha)}{(n+2)^2}<1, \quad \forall n\geq 2
\label{apA4}
\feqr
and the positivity of $a_n$'s.
Also we used the identity
\beqr
\sum_{n=2}^{\infty}\left(\frac{\Gamma(n+1-\alpha)}{\Gamma(n+2)}\right)^2=\left(\frac{\Gamma(3-\alpha)}{\Gamma(4)}\right)^2 {}_3F_2(1,3-\alpha,3-\alpha;4,4;1).
\label{apA5}
\feqr
\section*{Appendix B }
\label{apB}
\renewcommand{\theequation}{B.\arabic{equation}}
\setcounter{equation}{0}
Let 
\begin{eqnarray}
J_{ij}:=\delta_{i,n+1-j}, \,\, i,j=1,2,\cdots,n
\label{apB1}
\end{eqnarray}
be the matrix with ones along the secondary diagonal and zeroes elsewhere. For a real, symmetric and Toeplitz matrix one can prove that \cite{Ref11}
\begin{eqnarray}
[J,\mathcal{A}_n]=0
\label{apB2}
\end{eqnarray} 
Using \rf{apB2} and since $J^2=I$ this implies that $\mathcal{A}_n x=\lambda x$ if and only if $\mathcal{A}_n(Jx)=\lambda (Jx)$. If $\lambda$ has multiplicity one then from $\parallel x\parallel_2=\parallel Jx\parallel_2$ and $Jx=cx$ we conclude that $c=\pm 1$. 
\begin{definition}
An $n$-dimensional vector $x$ will be called symmetric if
\begin{eqnarray}
Jx=x
\label{apB3} 
\end{eqnarray}
or skew-symmetric if 
\begin{eqnarray}
Jx=-x.
\label{apB4}
\end{eqnarray}
 \end{definition}
\begin{definition}
The eigenvalue $\lambda$ of $\mathcal{A}_n$ is even (or odd) if it has an associated symmetric (or skew-symmetric) eigenvector $x$. 
\end{definition}

\bibliographystyle{plain}

\begin{thebibliography} {3}

\bibitem{Ref1} L\'evy, P. (1925). Calcul des Probabilit\'es, Gauthier-Villars.

\bibitem{Ref2} Gnedenko, B. V., and Kolmogorov, A. N. (1968). Limit Distributions for Sums of Independent Random Variables, Addison-Wesley.

\bibitem{Ref3} Feller, W. (1971). An Introduction to Probability Theory and Its
Application, Vol. II, 2nd ed., John Wiley and Sons, New York.

\bibitem{Ref3a} Sato, Ken-Iti (2004). L\'evy Processes and Infinitely Divisible Distributions, Cambridge Studies in Advanced Mathematics 68, Cambridge Univeristy Press, United Kingdom.

\bibitem{Ref3b} Applebaum, D. (2005). L\'evy Processes and Stochastic Calculus, Cambridge Studies in Advanced Mathematics 93, Cambridge Univeristy Press, United Kingdom.

\bibitem{Ref4} N. Laskin, \textit{Fractals and quantum mechanics}, \chaos{10}{2000}{780}.

\bibitem{Ref5} X. Guo and M. Xu, \textit{Some physical applications of fractional Schr\"{o}dinger equation}, \jmp{47}{2006}{082104}.

\bibitem{Ref6} M. Jeng, S.-L.-Y. Xu, E. Hawkins and J. M. Schwarz, \textit{On the nonlocality of the Schr\"{o}dinger equation}, Preprint arXiV:0810.1543.

\bibitem{Ref7} S. B. Haley, \textit{Solution of band matrix equations by projection-recurrence}, \laa{32}{1980}{33}.

\bibitem{Ref8} A.N. Hatzinikitas and J.K. Pachos, \textit{One dimensional stable probability density functions for rational index $0<\alpha \leq 2$}, \ap{323}{2008}{3000}. 

\bibitem{Ref9} S.G. Samko, A.A. Kilbas and O.I. Marichev, (1993). Fractional Integrals and
Derivatives - Theory and Applications, Gordon and Breach, New York.

\bibitem{Ref10} A. B\"{o}ttcher and B. Silbermann, (1999). Introduction to Large Truncated Toeplitz Matrices, Springer-Verlag, New York.

\bibitem{Ref11} A. Cantoni and P. Butler, \textit{Eigenvalues and eigenvectors of symmetric centrosymmetric matrices}, \laa{13}{1976}{275}.

\bibitem{Ref12} G. P. Tolstov (1976). Fourier Series, Dover, New York.

\bibitem{Ref13} A. Lenard, \textit{Momentum distribution in the ground state of the one-dimensional system of impenetrable bosons}, \jmp{5}{1964}{930}.

\bibitem{Ref14} Gradshteyn, I.S. and Ryzhik, I.M. (1994), 5th edition. Table of Integrals, Series, and Products, Academic Press.

\bibitem{Ref15} Holtsmark, J. (1919): \"Uber die Verbreiterung von Spektrallinien. \apk{363}{577}.

\bibitem{Ref16} Chandrasekhar, S. (1943). Stochastic Problems in Physics and Astronomy. \rmp{15}{1}.

\bibitem{Ref17} A. Zoia, A. Rosso and M. Kardar, \textit{Fractional Laplacian in bounded domains}, \pre{76} {2007}{021116}.

\end{thebibliography}

\end{document}